\documentclass[12pt,a4paper]{article}

\usepackage[utf8]{inputenc}
\usepackage[T1]{fontenc}
\usepackage{amsmath,amssymb,amsthm}
\usepackage{geometry}
\usepackage{setspace}
\usepackage{natbib}
\usepackage{booktabs}
\usepackage{tabularx}
\usepackage{graphicx}
\usepackage{hyperref}
\usepackage{enumitem}
\usepackage{float}
\usepackage{caption}
\usepackage{xcolor}
\usepackage{array}
\usepackage{multirow}
\usepackage[protrusion=true,expansion=false]{microtype}  % Protrusion only — safe with CM fonts

\hypersetup{colorlinks=true,linkcolor=blue!60!black,citecolor=blue!60!black,urlcolor=blue!60!black}

\renewcommand{\arraystretch}{1.15}           % Slightly more breathing room in all tables

\newtheorem{proposition}{Proposition}
\newtheorem{corollary}{Corollary}
\newtheorem{definition}{Definition}
\theoremstyle{remark}
\newtheorem*{remark}{Remark}

\title{\textbf{The Data Hydration Gap: A Formal Model of Underinvestment in General-Purpose Data Products Under Decentralized Governance}}
\author{Gaston Besanson\thanks{Universidad Torcuato Di Tella.}}
\date{Working Paper --- March 2026}

\begin{document}

\maketitle

\begin{abstract}
\noindent
When organizations decentralize data product ownership---as in the data mesh paradigm---each domain team optimizes for its immediate analytical needs, underinvesting in the cross-domain \emph{generality} that enables organization-wide reuse. We formalize this as a simultaneous-move game in which $N$ domains choose quality ($q$) and generality ($g$); generality creates positive externalities but is privately costly. The Nash equilibrium generality gap is increasing in the number of domains and the value of cross-domain analytics; under plausible parameter configurations, a corner solution obtains in which no reusable ``silver layer'' emerges organically---a condition we term the \emph{data mesh trap}. Technical debt from narrow products grows quadratically in $N$. An illustrative calibration suggests non-trivial organizational welfare losses under plausible enterprise parameters. We derive within-model conditions under which centralized, federated, and hybrid governance regimes dominate, and we identify the information asymmetries and transaction costs that complicate implementation. The model provides a formal foundation for empirical research on decentralized data governance.

\medskip
\noindent
\textbf{Keywords:} data mesh, data products, data governance, externalities, underinvestment, technical debt, platform economics, silver layer

\medskip
\noindent
\textbf{JEL Codes:} D21, D62, L15, L23, M15
\end{abstract}

\newpage
\tableofcontents
\newpage

%=============================================================
% 1. INTRODUCTION
%=============================================================
\section{Introduction}\label{sec:intro}

The data mesh paradigm \citep{Dehghani2019, Dehghani2022} advocates domain-oriented data ownership: individual business domains create and maintain their own ``data products,'' promising contextual understanding, rapid iteration, and freedom from centralized bottlenecks. Practitioner adoption has been rapid and the framework has attracted growing IS research attention \citep{Santos2023, Driessen2023}.

This paper develops a formal incentive analysis of a structural problem with decentralized data product governance: the \emph{generality underinvestment problem}. When domain teams independently structure their data products, each optimizes for immediate analytical use cases. Creating general-purpose products---with standardized schemas, cross-domain semantic alignment, and comprehensive documentation---requires investment whose benefits accrue primarily to other domains, generating a positive externality that decentralized governance fails to internalize.

The problem has the structure of a public goods game. General-purpose data products exhibit non-rivalry and partial excludability \citep{Samuelson1954}. The resulting underinvestment parallels classic externality results \citep{Pigou1920}, but the intra-organizational setting introduces distinctive features: governance can deploy authority-based mechanisms (mandates, budgets, subsidies), but these face transaction costs related to political capital, autonomy norms, and information asymmetry \citep{Williamson1985, Gulati2012}.

We formalize this as a simultaneous-move game in which $N$ domains choose quality ($q$) and generality ($g$). Our central result (Proposition~\ref{prop:underinvestment}) establishes that each domain invests in generality only where private marginal benefits equal private marginal costs---ignoring the cross-domain benefits its generality creates. The gap grows with $N$ and cross-domain analytics value. We identify conditions under which a corner solution obtains ($g = 0$ for all $i$), which we term the \emph{data mesh trap} (Corollary~\ref{cor:trap}), defined as $\alpha_i \beta < \kappa / q_i^*$ for all $i$.

The paper contributes to the IS and data governance literatures in three ways:

\begin{enumerate}[leftmargin=2em]
    \item We develop a formal incentive analysis of decentralized internal data-product governance, characterizing the externality structure and equilibrium underinvestment. This provides microeconomic foundations for a phenomenon documented in practitioner case studies \citep{Machado2022, Buger2023} but not previously formalized.
    
    \item We provide an illustrative calibration to enterprise data architectures using observable proxies, illustrating that the mechanism can generate non-trivial welfare losses under plausible parameter configurations. These are explicitly model-implied scenario outputs, not empirical measurements; we specify the identification challenges that future empirical work must address.
    
    \item We analyze alternative governance regimes---centralized hydration, federated governance with incentive alignment, and hybrid approaches---deriving within-model conditions under which each dominates. The analysis suggests that the theoretical optimality of Pigouvian subsidies must be tempered by the transaction costs, information asymmetries, and organizational frictions that complicate implementation.
\end{enumerate}

Section~\ref{sec:lit} positions the model within intersecting literatures. Section~\ref{sec:model} develops the formal model with explicit scope conditions. Section~\ref{sec:equilibrium} characterizes the Nash equilibrium. Section~\ref{sec:welfare} analyzes welfare and technical debt. Section~\ref{sec:calibration} presents an illustrative calibration. Section~\ref{sec:governance} examines governance regimes. Section~\ref{sec:predictions} derives testable predictions. Section~\ref{sec:discussion} discusses limitations and concludes.

%=============================================================
% 2. RELATED LITERATURE AND THEORETICAL FOUNDATIONS
%=============================================================
\section{Related Literature and Theoretical Foundations}\label{sec:lit}

This paper enters the intersection of five research streams: (a)~data mesh and data architecture, (b)~externalities and public goods, (c)~platform economics and two-sided markets, (d)~architectural technical debt, and (e)~organizational design and governance of digital infrastructure. We identify what each contributes, where each falls short, and how our formal treatment advances the conversation.

\subsection{Data Mesh and Data Architecture}\label{sec:lit:mesh}

The data mesh concept \citep{Dehghani2019, Dehghani2022} rests on four principles: domain-oriented ownership, data as a product, self-serve data infrastructure, and federated computational governance. The practitioner and gray literature has grown substantially \citep[for systematic reviews, see][]{Santos2023, Driessen2023}, and early case-study research documents both adoption patterns and recurring failure modes \citep{Machado2022, Buger2023}. However, formal economic analysis of the incentive structures underlying data mesh outcomes remains limited. Most existing treatments document \emph{what} happens when organizations adopt data mesh, without modeling \emph{why} specific failures arise from the microeconomic incentives facing domain teams.

The silver layer concept originates in the medallion architecture \citep{Databricks2021}, distinguishing bronze (raw ingestion), silver (cleaned and conformed), and gold (business-level aggregates) data layers. Our model focuses on the silver layer as the critical locus of generality decisions. We define \emph{data hydration} as the systematic process of validating, cleansing, standardizing, and enriching data from raw source systems into reusable silver-layer data products. Within the model, centralized hydration operates as a public infrastructure subsidy: either a centralized reduction in the marginal generalization cost ($\gamma_g$) through shared platform tooling, or a centralized assumption of the fixed standardization cost ($\kappa$) on behalf of domains.

\subsection{Externalities, Public Goods, and Intra-Organizational Market Failures}\label{sec:lit:externalities}

The generality underinvestment problem has the analytical structure of a public goods game with positive externalities \citep{Samuelson1954, Pigou1920}. Our model adapts these frameworks to intra-organizational data products, where generality exhibits non-rivalry and partial excludability. Unlike traditional public goods settings, however, producers and consumers operate within a single organizational boundary, creating governance possibilities (mandates, budgets, subsidies) and constraints (political capital, autonomy norms) that differ from the public sector \citep{Williamson1985}.

The organizational economics literature on internal markets and transfer pricing \citep{Eccles1985, Holmstrom1991} is directly relevant: the problem of incentivizing domain teams to produce general data products mirrors the challenge of setting internal prices that balance local optimization against firm-wide efficiency. The distinction between authority-based and incentive-based coordination \citep{Gulati2012, Puranam2014} maps onto our governance regime comparison: centralized hydration is authority-based, while federated governance with Pigouvian subsidies is incentive-based.

\subsection{Platform Economics and Internal Data Markets}\label{sec:lit:platform}

Our model connects to the platform economics literature \citep{RochetTirole2003, Armstrong2006} through the observation that internal data platforms mediate between data producers (domains) and consumers (analytics teams). Recent work on data bottlenecks in digital platforms---where information asymmetry and exclusive data control lead to market failures \citep{Duch-Brown2017}---provides an external analogy. In the intra-organizational context, domain teams function as localized micro-monopolies over their operational data, and the internal data market suffers from analogous externality failures.

The emerging literature on inter-organizational data sharing \citep{Gelhaar2021, Oliveira2019} highlights tensions between monetization incentives and combinatorial value creation---a dynamic that parallels our setting. The mathematical dynamics of our propositions hold regardless of whether interacting nodes are internal departments or independent enterprises, suggesting generalizability to sovereign data spaces (e.g., Gaia-X), though such extension would require modeling inter-firm contracting frictions.

\subsection{Architectural Technical Debt as Economic Externality}\label{sec:lit:atd}

The technical debt metaphor \citep{Cunningham1992}, elaborated in influential typologies \citep{Kruchten2012}, distinguishes code-level debt from architectural technical debt (ATD). Unlike code debt, ATD involves fundamental structural design choices that are difficult to resolve without systemic rework \citep{Verdecchia2021, Besker2018}. Our Proposition~\ref{prop:debt} provides a formal economic foundation for ATD accumulation in decentralized data systems, demonstrating quadratic scaling in the number of domains. We reframe ATD not merely as an engineering oversight, but as an \emph{inevitable economic externality} of uncoordinated decentralized system design---connecting the software engineering literature to the economics of organizational governance.

\subsection{Organizational Design and Digital Infrastructure Governance}\label{sec:lit:orgdesign}

The IS literature on IT governance \citep{Weill2004, Tiwana2014} has long examined the tension between centralized control and decentralized autonomy over digital resources. \citet{Ceccagnoli2012} analyzes how governance structures shape complementor investment. Our contribution is distinct: we model the \emph{specific externality structure} of data product generality under decentralized ownership, deriving the conditions under which underinvestment arises rather than prescribing governance arrangements ex ante.

\subsection{Literature Gap and Positioning}\label{sec:lit:gap}

The intersection of these literatures reveals a clear gap. Existing data mesh research documents architectural patterns and prescriptions; public-goods theory explains underprovision abstractly; platform economics models mediated exchange across firm boundaries; ATD scholarship documents debt consequences through cases; and organizational design research examines authority--autonomy trade-offs at a general level. We contribute a tractable formal model that integrates these perspectives to explain why, under specified incentive conditions, decentralized data governance may fail to produce reusable data products, and what governance mechanisms can in principle correct the underlying externality.

%=============================================================
% 3. THE MODEL
%=============================================================
\section{The Model}\label{sec:model}

\subsection{Setup}\label{sec:model:setup}

We model a data organization comprising $N$ domain teams ($i = 1, \ldots, N$), $M$ cross-domain consumers ($j = 1, \ldots, M$), and an optional central data platform. Each domain $i$ creates data products characterized by two attributes:
\begin{itemize}[leftmargin=2em]
    \item \textbf{Quality} $q_i \in [0,1]$: accuracy, completeness, freshness, and documentation.
    \item \textbf{Generality} $g_i \in [0,1]$: degree to which the product serves cross-domain needs.
\end{itemize}

\begin{remark}[Construct clarification: generality as reduced-form reusability]
In practice, cross-domain reusability encompasses schema standardization, semantic alignment, documentation completeness, interface stability, and metadata governance. For mathematical tractability, we collapse these into a single continuous variable $g_i$ serving as a reduced-form measure of cross-domain reusability---formally, the inverse of the integration cost experienced by out-of-domain consumers. This reduced-form approach allows the formal results to hold without over-claiming empirical exactitude regarding how standardization is technically achieved. The multidimensionality of the underlying construct is a limitation we return to in Section~\ref{sec:discussion:limits}.
\end{remark}

\subsection{Costs}\label{sec:model:costs}

Domain $i$'s cost of creating a data product is:
\begin{equation}\label{eq:cost}
    C_i(q_i, g_i) = \frac{\gamma_q}{2}\, q_i^2 + \frac{\gamma_g}{2}\, g_i^2 \cdot q_i + \kappa \cdot g_i
\end{equation}

The first term captures quadratic quality costs (diminishing returns). The second is an interaction: making high-quality data general is harder than making low-quality data general, because standardizing accurate data requires more rigorous semantic reconciliation. The third represents fixed generalization costs (schema governance overhead, cross-domain coordination, semantic standardization).

\subsection{Benefits}\label{sec:model:benefits}

Domain $i$ receives benefits from two sources. Direct benefit from its own analytics:
\begin{equation}\label{eq:own}
    B_i^{\text{own}}(q_i, g_i) = \alpha_i \cdot q_i \cdot (1 + \beta \cdot g_i)
\end{equation}
where $\alpha_i$ is domain $i$'s analytics value and $\beta \geq 0$ captures the synergy between quality and generality---reflecting that generalization often forces better discipline (catching errors, improving documentation).

Cross-domain benefit from accessing other domains' products:
\begin{equation}\label{eq:cross}
    B_i^{\text{cross}}(\{q_j, g_j\}_{j \neq i}) = \lambda \sum_{j \neq i} q_j \cdot g_j
\end{equation}
where $\lambda > 0$ captures the value of cross-domain data access. Only general products ($g_j > 0$) are usable by domain $i$---narrow products are too specific to provide value elsewhere.

Cross-domain consumer $j$'s utility:
\begin{equation}\label{eq:consumer}
    U_j = \sum_{i=1}^{N} \omega_{ij} \cdot q_i \cdot g_i - S \cdot \sum_{i=1}^{N} (1 - g_i)
\end{equation}
where $\omega_{ij}$ is consumer $j$'s weight on domain $i$'s data and $S$ is the switching/integration cost.

\subsection{Timing, Information Structure, and Assumptions}\label{sec:model:timing}

The game proceeds in two stages: (1)~each domain $i$ simultaneously chooses $(q_i, g_i)$; (2)~cross-domain consumers access products, benefits realize, payoffs are determined.

The baseline model rests on three key assumptions, each of which we revisit as explicit limitations in Section~\ref{sec:discussion:limits}:

\begin{enumerate}[leftmargin=2em]
    \item \textbf{Complete information.} All domains observe the full parameter vector. In practice, the true quality and generality of data products may be unverifiable ex ante, creating moral hazard (see Section~\ref{sec:discussion:limits}, ``Information assumptions'').
    \item \textbf{Symmetric domains.} All domains share identical $\alpha_i$, $\gamma_q$, $\gamma_g$, $\kappa$. Real organizations are inherently asymmetric; we explore this in Appendix~\ref{app:heterogeneous} and discuss it as a limitation in Section~\ref{sec:discussion:limits}, ``Symmetric domains.''
    \item \textbf{Static game.} The core equilibrium is one-shot. We introduce a multi-period extension for technical debt in Section~\ref{sec:welfare:debt}, but a fully dynamic model with learning and reputation is left for future work (Section~\ref{sec:discussion:limits}, ``Static model'').
\end{enumerate}

%=============================================================
% 4. EQUILIBRIUM ANALYSIS
%=============================================================
\section{Equilibrium Analysis}\label{sec:equilibrium}

\subsection{Domain Optimization}\label{sec:eq:optimization}

Domain $i$ maximizes:
\begin{equation}\label{eq:profit}
    \pi_i = B_i^{\text{own}}(q_i, g_i) + B_i^{\text{cross}}(\{q_j, g_j\}_{j \neq i}) - C_i(q_i, g_i)
\end{equation}

Taking the first-order condition with respect to $g_i$:
\begin{equation}\label{eq:foc}
    \frac{\partial \pi_i}{\partial g_i} = \alpha_i \beta q_i - \gamma_g g_i q_i - \kappa = 0
\end{equation}

Solving:
\begin{equation}\label{eq:gstar}
    g_i^* = \max\left\{0, \; \frac{\alpha_i \beta - \kappa / q_i^*}{\gamma_g}\right\}
\end{equation}

Domain $i$'s FOC for $g_i$ does not include the cross-domain benefit it provides to others: $\sum_{j \neq i} \lambda q_i$. This is the externality gap.

\begin{proposition}[Generality Underinvestment]\label{prop:underinvestment}
Within the model, in Nash equilibrium, each domain sets $g_i^{NE}$ strictly less than the socially optimal level $g_i^{SO}$. The generality gap is:
\begin{equation}\label{eq:gap}
    \Delta g_i = g_i^{SO} - g_i^{NE} = \frac{(N-1)\lambda}{\gamma_g}
\end{equation}
The gap is increasing in $N$ and $\lambda$, and decreasing in $\gamma_g$.
\end{proposition}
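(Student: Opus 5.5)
The plan is to compare two first-order conditions for $g_i$: the private one already derived in \eqref{eq:foc}, and the one from a utilitarian planner. The gap in \eqref{eq:gap} should then be exactly the externality term $(N-1)\lambda$ that domain $i$ ignores, divided by the curvature $\gamma_g$.

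\emph{Step 1: planner's objective.} I take the planner's objective to be aggregate domain welfare $W=\sum_{k=1}^N \pi_k$. Consumer utility \eqref{eq:consumer} can be added later as a robustness remark. Summing \eqref{eq:own}, \eqref{eq:cross} and \eqref{eq:cost} over domains, each product $q_i g_i$ appears once in the cross-domain benefit of every other domain. So $g_i$ enters $W$ through $\alpha_i q_i(1+\beta g_i)$, through $-C_i$, and through $\lambda\sum_{k\neq i} q_i g_i=(N-1)\lambda q_i g_i$.

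\emph{Step 2: first-order conditions and the gap.} Differentiating $W$ gives
\begin{equation*}
\frac{\partial W}{\partial g_i}=\alpha_i\beta q_i+(N-1)\lambda q_i-\gamma_g g_i q_i-\kappa .
\end{equation*}
This differs from \eqref{eq:foc} only by the term $(N-1)\lambda q_i$. Dividing by $q_i>0$ yields
\begin{equation*}
g_i^{SO}=\frac{\alpha_i\beta+(N-1)\lambda-\kappa/q_i}{\gamma_g}.
\end{equation*}
Subtracting \eqref{eq:gstar} at the same $q_i$ gives $\Delta g_i=(N-1)\lambda/\gamma_g$. This is strictly positive whenever $N\ge 2$ and $\lambda>0$. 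The comparative statics in $N$, $\lambda$ and $\gamma_g$ then follow by inspection.

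\emph{Step 3: concavity.} To confirm these are maxima, I check that $\pi_i$ and $W$ are concave in $g_i$: both have second derivative $-\gamma_g q_i<0$. So each first-order condition characterizes the unique optimum in $g_i$ given $q_i$.

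\emph{Main obstacle.} The closed-form gap is exact only when two conditions hold.
\begin{enumerate}[leftmargin=2em]
\item Both solutions are interior: $g^{NE}>0$ and $g^{SO}<1$.
\item $g_i$ is compared at a common quality level $q_i$. In general the planner also raises $q_i$, since $\partial W/\partial q_i$ contains the extra term $(N-1)\lambda g_i$, so $q^{SO}\neq q^{NE}$ and the $\kappa/q_i$ terms do not cancel exactly.
\end{enumerate}
I would handle the second point by stating the result as holding \emph{conditional on quality}, that is, evaluated at $q_i=q_i^*$. This is consistent with how \eqref{eq:gstar} is written in terms of $q_i^*$.

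I would then add a remark on what happens outside these conditions. In full joint optimization, $q^{SO}\ge q^{NE}$ lowers $\kappa/q$ further, which only widens the gap, so $(N-1)\lambda/\gamma_g$ becomes a lower bound. At corners the gap is truncated: if $g^{NE}=0$, the gap is $\min\{1,\max\{0,g^{SO}\}\}$. Strict underinvestment still holds whenever $g^{SO}>0$.
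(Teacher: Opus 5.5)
Your proof follows the paper's: both compare the private first-order condition \eqref{eq:foc} with the planner's first-order condition at the same $q_i$, where the only difference is the ignored externality term $(N-1)\lambda q_i$, and dividing by $\gamma_g q_i$ gives the gap. The paper puts consumer utility into the planner's objective and then drops the $M\bar\omega_i$ term, whereas you leave consumers out from the start; your interiority, common-quality and concavity conditions are assumptions the paper's proof relies on without stating them.
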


\begin{proof}
See Appendix~\ref{app:proof1}.
\end{proof}

\paragraph{Result.} The private first-order condition omits the benefit conferred on out-of-domain users. The gap expression $\frac{(N-1)\lambda}{\gamma_g}$ is exact within the model under the stated assumptions.

\paragraph{Economic intuition.} A domain captures the cost of standardization but not the full value it creates for other domains. As the organization grows (higher $N$) or cross-domain analytics become more valuable (higher $\lambda$), the wedge between private and social incentives widens. Each domain individually behaves rationally; the inefficiency arises from the interaction structure, not from irrationality.

\paragraph{Scope conditions.} Expression (\ref{eq:gap}) abstracts from the consumer term $M\bar{\omega}_i$; retaining it yields $\Delta g_i = \frac{(N-1)\lambda + M\bar{\omega}_i}{\gamma_g}$, which is weakly larger. The result assumes interior solutions at the social optimum; the corner case is Corollary~\ref{cor:trap}. Symmetry is relaxed in Appendix~\ref{app:heterogeneous}.

\subsection{Corner Solution: The Data Mesh Trap}\label{sec:eq:corner}

\begin{corollary}[Data Mesh Trap]\label{cor:trap}
When $\alpha_i \beta < \kappa / q_i^*$ for all domains, the Nash equilibrium has $g_i^{NE} = 0$ for all $i$. No silver layer emerges organically.
\end{corollary}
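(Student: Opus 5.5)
The plan is to read the corollary directly off the domain's best-response problem in $g_i$, holding the equilibrium quality choice fixed. The result should then follow from the first-order condition (\ref{eq:foc}) and the closed form (\ref{eq:gstar}).

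\textbf{Step 1: structure of the $g_i$-subproblem.} Fix any $q_i>0$ (in particular $q_i=q_i^*$). As a function of $g_i$ alone, $\pi_i$ in (\ref{eq:profit}) is
\[
\alpha_i q_i(1+\beta g_i) - \tfrac{\gamma_g}{2} g_i^2 q_i - \kappa g_i + \text{(terms independent of } g_i).
\]
The cross-domain term $B_i^{\text{cross}}$ does not depend on $g_i$, so it drops out. The remaining function is strictly concave in $g_i$ whenever $\gamma_g q_i>0$. Its derivative, given by (\ref{eq:foc}), equals $\alpha_i\beta q_i-\kappa-\gamma_g g_i q_i$.

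\textbf{Step 2: the corner.} The hypothesis $\alpha_i\beta<\kappa/q_i^*$ is equivalent to $\alpha_i\beta q_i^*-\kappa<0$. Under it, the derivative at $g_i=0$ is strictly negative, and by concavity it stays negative on all of $[0,1]$. So $\pi_i$ is strictly decreasing in $g_i$ on the feasible set, and the unique best response is $g_i=0$. This is exactly the $\max\{0,\cdot\}$ branch of (\ref{eq:gstar}), whose second argument is negative under the hypothesis. Since the condition holds for every $i$, each domain's best response to $q_i^*$, whatever the others do, is $g_i=0$. The profile $g^{NE}=0$ therefore constitutes the equilibrium. Because $\lambda\sum_{j} q_j g_j=0$, no cross-domain usable product exists, which is the ``no silver layer'' statement.

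\textbf{Step 3: consistency of $q_i^*$ (the main obstacle).} The condition is stated in terms of the endogenous $q_i^*$, so I need to check that the equilibrium quality is well defined at the corner. With $g_i=0$, the $q_i$-problem reduces to maximizing $\alpha_i q_i-\tfrac{\gamma_q}{2}q_i^2$, giving $q_i^*=\min\{1,\alpha_i/\gamma_q\}$. The corner is self-consistent if $\alpha_i\beta<\kappa/q_i^*$ holds at this value, and I would read the hypothesis as imposed at this fixed point. I would also make two remarks:
\begin{itemize}
\item If $q_i^*=0$, the condition holds trivially when $\kappa>0$, and $g_i=0$ is optimal because the marginal cost $\kappa$ is positive.
\item A sufficient primitive condition is $\alpha_i\beta\min\{1,\alpha_i/\gamma_q\}<\kappa$.
\end{itemize}
Uniqueness of this equilibrium would be argued by noting that each domain's payoff is separable from the others' choices. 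Each best response is therefore a dominant strategy and does not depend on $(q_j,g_j)_{j\neq i}$.
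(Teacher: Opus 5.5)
Your Steps 1--2 are the paper's argument. The paper reads the corollary off the $\max\{0,\cdot\}$ in (\ref{eq:gstar}) with $q_i^*$ held fixed, and this correctly shows that at any equilibrium whose quality satisfies $\alpha_i\beta<\kappa/q_i^*$, one must have $g_i=0$. The error is in what you add on top. The sentence ``the profile $g^{NE}=0$ therefore constitutes the equilibrium'' and the Step 3 self-consistency check at $q_i^*=\min\{1,\alpha_i/\gamma_q\}$ both infer a best response from coordinate-wise optimality: $g_i$ optimal given $q_i$, and $q_i$ optimal given $g_i$. But domain $i$ chooses $(q_i,g_i)$ jointly, and $\pi_i$ is not jointly concave. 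Quality and generality are complements: $\partial^2\pi_i/\partial q_i\partial g_i=\alpha_i\beta-\gamma_g g_i$ is positive at $g_i=0$, and the Hessian determinant $\gamma_q\gamma_g q_i-(\alpha_i\beta-\gamma_g g_i)^2$ can be negative. So coordinate-wise optimality here delivers only a local maximum, and raising $q_i$ and $g_i$ together can pay. Concretely, take $\alpha_i=\beta=1$, $\gamma_q=2$, $\gamma_g=0.2$, $\kappa=0.6$. Then $q_i^*=1/2$ and your primitive condition holds ($\alpha_i\beta q_i^*=0.5<0.6$). Yet domain $i$'s own payoff is $0.25$ at $(1/2,0)$ and $2-1-0.1-0.6=0.3$ at $(1,1)$. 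So the corner profile is not an equilibrium, and $\alpha_i\beta\min\{1,\alpha_i/\gamma_q\}<\kappa$ is not a sufficient condition. This does not contradict the corollary as the paper reads it: the true maximizer is $(0.95,1)$, where $\kappa/q_i\approx 0.63<\alpha_i\beta$, so the hypothesis fails there.

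To repair this, use the separability you correctly note. The equilibrium is simply the profile of individual joint maximizers of $\pi_i$ over $[0,1]^2$, which exist by compactness and continuity. A primitive condition guaranteeing the trap must make $g_i=0$ optimal for \emph{every} feasible $q_i$, not only at $q_i^*$. For instance, $\alpha_i\beta<\kappa$ gives $\partial\pi_i/\partial g_i\le\alpha_i\beta q_i-\kappa<0$ on the whole square, so $(q_i^*,0)$ is the unique joint maximizer. Your condition reduces to this one only when $q_i^*=1$. The stronger condition does hold in the paper's calibration ($\alpha\beta=0.075<\kappa=0.25$). If you do not want to impose it, state the result in the paper's conditional form, with the hypothesis evaluated at the equilibrium quality, and drop the existence claim.
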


The corner solution obtains when private synergy benefits ($\alpha_i \beta$) fall below effective fixed standardization costs ($\kappa / q_i^*$). Within the model, this defines the \emph{data mesh trap}: the parameter region yielding zero generality despite positive social value.

\paragraph{Result.} The trap follows from the externality structure combined with fixed costs. No domain can unilaterally break the trap because generality costs are private while benefits are shared.

\paragraph{Economic intuition.} The condition is more likely when $\beta$ is low (generality does not improve the domain's own analytics), $\kappa$ is high, or $\alpha_i$ is low. Paradoxically, domains that care least about data quality are most likely to produce narrow products, even when their data is valuable to others.

\paragraph{Scope conditions.} The data mesh trap describes a corner solution within the model, not a universal diagnosis. Whether it describes a particular organization is an empirical question.

%=============================================================
% 5. WELFARE ANALYSIS AND TECHNICAL DEBT
%=============================================================
\section{Welfare Analysis and Technical Debt}\label{sec:welfare}

\subsection{Total Welfare}\label{sec:welfare:total}

Total welfare:
\begin{equation}\label{eq:welfare}
    W = \sum_{i=1}^{N} \pi_i + \sum_{j=1}^{M} U_j
\end{equation}

\begin{proposition}[Welfare Loss from Decentralization]\label{prop:welfareloss}
Within the model, the welfare loss from Nash equilibrium relative to the social optimum is:
\begin{equation}\label{eq:welfareloss}
    \Delta W = \sum_{i=1}^{N} \left[(N-1)\lambda + M\bar{\omega}_i\right] q_i^* \cdot \Delta g_i - \frac{\gamma_g}{2} \sum_{i=1}^{N} \left[(g_i^{SO})^2 - (g_i^{NE})^2\right] q_i^*
\end{equation}
Welfare loss is increasing in $N^2$ and in $\lambda$ and $M$.
\end{proposition}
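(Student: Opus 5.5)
Since welfare $W$ in (\ref{eq:welfare}) is additive across domains and consumers, I will isolate the part of $W$ that depends on the generality vector $g=(g_1,\ldots,g_N)$, holding quality fixed at $q_i^*$. This matches how Proposition~\ref{prop:underinvestment} treats $q_i^*$ as fixed when deriving the gap, and it keeps the comparison a pure generality comparison.

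\textbf{Step 1: collect the $g_i$-dependent terms.} Summing $\pi_i$ over $i$ and $U_j$ over $j$, the terms in $g_i$ are
\begin{itemize}
\item the private synergy $\alpha_i\beta q_i^* g_i$;
\item the cross-domain benefit $\lambda q_i^* g_i$, which appears in each of the $N-1$ other domains' $B^{\text{cross}}$;
\item the consumer term $\sum_j \omega_{ij} q_i^* g_i = M\bar{\omega}_i q_i^* g_i$, with $\bar{\omega}_i$ the mean weight;
\item the consumer integration-cost term $+M S g_i$;
\item minus the private cost $\tfrac{\gamma_g}{2} g_i^2 q_i^* + \kappa g_i$.
\end{itemize}
So $W$ is additively separable across $i$ in $g_i$. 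Each summand is concave, and the social FOC is the private FOC (\ref{eq:foc}) augmented by $[(N-1)\lambda + M\bar{\omega}_i]q_i^*$ (plus $MS$). This reproduces the gap of Proposition~\ref{prop:underinvestment} and its scope-condition extension.

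\textbf{Step 2: compute $\Delta W = W(g^{SO}) - W(g^{NE})$ summand by summand.} The linear terms contribute $(\text{private marginal term} + \text{external marginal term})\,\Delta g_i$. The quadratic term contributes $-\tfrac{\gamma_g}{2}q_i^*[(g_i^{SO})^2-(g_i^{NE})^2]$.

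The key simplification uses the Nash FOC at an interior $g_i^{NE}$, together with the corresponding term in the social problem. The private linear marginal benefit $(\alpha_i\beta q_i^*-\kappa)$ cancels against part of the quadratic cost difference. The cleanest route is to write $\alpha_i\beta q_i^*-\kappa=\gamma_g g_i^{NE}q_i^*$ and substitute. What remains is exactly the external-benefit term times $\Delta g_i$ minus the stated quadratic difference, provided the private linear piece is absorbed into the quadratic bracket. This yields (\ref{eq:welfareloss}) up to sign bookkeeping, which I will check carefully.

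I will fold the $MS$ term into $M\bar{\omega}_i$ by normalization, or note that it is dropped in the paper's abstraction. In the corner case (Corollary~\ref{cor:trap}), $g_i^{NE}=0$ and the private linear term is negative rather than zero, so the formula must be read with that term included. I will flag this as a scope condition, as the paper does.

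\textbf{Step 3: comparative statics.} Substituting $\Delta g_i=[(N-1)\lambda+M\bar{\omega}_i]/\gamma_g$, the loss per domain reduces to $\tfrac{q_i^*}{2\gamma_g}[(N-1)\lambda+M\bar{\omega}_i]^2$. This is the Harberger-triangle form, and it makes the sign unambiguous. Summing over $N$ domains gives a total that is increasing in $\lambda$ and $M$. It scales like $N(N-1)^2\lambda^2$, which is at least of order $N^2$, justifying the claim ``increasing in $N^2$''.

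\textbf{Main obstacle.} The hardest step is the algebraic reconciliation in Step 2: getting the stated two-term expression rather than the triangle form, and handling the private linear term and boundary cases consistently. I would first verify the triangle form, which is unambiguous. I would then show it equals the paper's expression under the interior Nash FOC, using $(g^{SO})^2-(g^{NE})^2=\Delta g\,(g^{SO}+g^{NE})$.
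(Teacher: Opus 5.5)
Your Step 1 and your triangle computation are correct, but the reconciliation in Step 2 cannot be carried out. The identity you are aiming for does not hold in the model. The paper gives no derivation of this proposition, so there is no argument of the paper's to match.

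Write $a_i := \alpha_i\beta q_i^* - \kappa$, $e_i := [(N-1)\lambda + M\bar{\omega}_i]q_i^*$ and $c_i := \gamma_g q_i^*$, dropping $MS$ as the paper's social FOC does. Your Step 2 correctly yields
\[
W(g^{SO}) - W(g^{NE}) = \sum_{i=1}^N \Bigl[(a_i+e_i)\Delta g_i - \tfrac{c_i}{2}\bigl((g_i^{SO})^2-(g_i^{NE})^2\bigr)\Bigr].
\]
Substituting the interior Nash FOC $a_i = c_i g_i^{NE}$ gives $a_i\Delta g_i - \tfrac{c_i}{2}\bigl((g_i^{SO})^2-(g_i^{NE})^2\bigr) = -\tfrac{c_i}{2}(\Delta g_i)^2$. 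So the private linear term is absorbed, but only by turning the difference of squares into the square of the difference. Hence $\Delta W = \sum_i[e_i\Delta g_i - \tfrac{c_i}{2}(\Delta g_i)^2] = \sum_i e_i^2/(2c_i)$.

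Expression (\ref{eq:welfareloss}) instead keeps the bracket $(g_i^{SO})^2-(g_i^{NE})^2$ and simply omits $a_i\Delta g_i$. At an interior equilibrium it therefore equals your triangle minus $\sum_i e_i g_i^{NE}$, which is nonzero whenever some $g_i^{NE}>0$. For instance, $c_i=1$, $a_i=0.2$, $e_i=0.3$ give $g_i^{NE}=0.2$ and $g_i^{SO}=0.5$. The true loss is $0.045$, while (\ref{eq:welfareloss}) returns $0.09-0.105=-0.015$.

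In the trap, $g_i^{NE}=0$ makes the bracket right, but $a_i<0$ does not vanish. Moreover $\Delta g_i=(a_i+e_i)/c_i$ rather than $e_i/c_i$, so your Step 3 substitution fails there as well; the loss is $(a_i+e_i)^2/(2c_i)$. The stated formula is exact only on the knife-edge $\alpha_i\beta q_i^*=\kappa$. So ``up to sign bookkeeping'' hides a genuine discrepancy, and your closing plan of showing that the triangle equals the paper's expression under the interior FOC will fail.

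What you can prove is a corrected statement: $W(g^{SO})-W(g^{NE})$ equals the expression in (\ref{eq:welfareloss}) plus $\sum_i(\alpha_i\beta q_i^*-\kappa)\Delta g_i$. With $g_i^{SO}$ interior, this is $\sum_i \tfrac{q_i^*}{2\gamma_g}[(N-1)\lambda+M\bar{\omega}_i]^2$ at an interior equilibrium and $\sum_i \tfrac{q_i^*}{2\gamma_g}[\alpha_i\beta-\kappa/q_i^*+(N-1)\lambda+M\bar{\omega}_i]^2$ in the trap.

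Take the comparative statics from these forms, not from (\ref{eq:welfareloss}) as written. At an interior equilibrium the latter equals $\sum_i(e_i^2/(2c_i)-e_ig_i^{NE})$, whose derivative in $e_i$ is $\Delta g_i-g_i^{NE}$ and has no fixed sign.

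Also, your $N(N-1)^2$ scaling presumes $g_i^{SO}\le 1$. Since $\Delta g_i$ grows linearly in $N$ (it equals $11$ at the paper's baseline), the constraint $g_i\le 1$ binds already for moderate $N$. After that, each domain's loss is linear in $e_i$ and the total grows like $N(N-1)\lambda$. The loss is still increasing, and this is closer to the paper's pairwise intuition.
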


\paragraph{Result.} Welfare loss has three components: cross-domain inefficiency, consumer harm (redundant custom pipelines), and duplicated effort.

\paragraph{Economic intuition.} The quadratic dependence on $N$ arises because externalities are pairwise: each additional domain creates new unrealized gains from trade with every existing domain.

\paragraph{Scope conditions.} The expression assumes the social planner can observe and set generality without additional cost. If the planner faces its own information or coordination costs, the achievable welfare improvement is smaller (see Section~\ref{sec:governance}).

\subsection{Technical Debt Accumulation}\label{sec:welfare:debt}

We extend the model to multiple periods. If domain $i$ needs data from domain $j$ in period~2 but $g_j = 0$, domain $i$ must build custom integration:
\begin{equation}\label{eq:td}
    TD_{ij} = \tau \cdot q_j \cdot (1 - g_j)
\end{equation}
Total organizational technical debt:
\begin{equation}\label{eq:tdtotal}
    TD_{\text{total}} = \tau \sum_{i=1}^{N} \sum_{j \neq i} q_j^* \cdot (1 - g_j^{NE}) \cdot P_{ij}
\end{equation}

\begin{proposition}[Technical Debt Cascade]\label{prop:debt}
Under Nash equilibrium with $g_i^{NE} = 0$ for all $i$, and with symmetric domains ($q_j = \bar{q}$, $P_{ij} = \bar{P}$):
\begin{equation}\label{eq:debtquad}
    TD_{\text{total}} = \tau \cdot \bar{q} \cdot N(N-1) \cdot \bar{P}
\end{equation}
Technical debt grows quadratically in $N$.
\end{proposition}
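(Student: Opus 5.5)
The plan is to substitute the hypotheses of Proposition~\ref{prop:debt} directly into the aggregate expression (\ref{eq:tdtotal}) and then count the terms of the double sum. First I will note that, under the data mesh trap of Corollary~\ref{cor:trap} (or simply under the hypothesis $g_j^{NE}=0$ for all $j$), every factor $(1-g_j^{NE})$ equals $1$. The summand therefore reduces to $\tau\, q_j^*\, P_{ij}$.

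Next I will impose symmetry. The equilibrium qualities coincide, $q_j^* = \bar q$; this is consistent with the symmetric-domain assumption of Section~\ref{sec:model:timing}, because each domain solves an identical first-order problem. The access probabilities are constant, $P_{ij}=\bar P$. Each summand is then the constant $\tau\bar q\bar P$, which is independent of $(i,j)$.

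It remains to count the ordered pairs $(i,j)$ with $i\neq j$ and $i,j\in\{1,\dots,N\}$. For each of the $N$ choices of $i$ there are $N-1$ choices of $j$, so the pair count is $N(N-1)$. This gives $TD_{\text{total}}=\tau\bar q N(N-1)\bar P$, which is (\ref{eq:debtquad}).

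For the quadratic-growth claim, I will write $N(N-1)=N^2-N$. The increment from $N$ to $N+1$ domains is $\tau\bar q\bar P\cdot 2N$, which grows linearly in $N$. Equivalently, the second difference is the positive constant $2\tau\bar q\bar P$. This holds provided $\tau,\bar q,\bar P>0$, and I will state that as the nondegeneracy condition.

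The only delicate point is whether $\bar q$ itself depends on $N$. With $g_i=0$, the $q_i$ first-order condition from (\ref{eq:profit}) is $\alpha_i-\gamma_q q_i=0$, so $q_i^*=\alpha_i/\gamma_q$. The cross-domain term $B_i^{\text{cross}}$ does not involve $q_i$, so $q_i^*$ is independent of $N$. This confirms that the $N$-dependence of $TD_{\text{total}}$ enters only through the pair count.
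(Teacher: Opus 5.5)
Your proof is correct and follows the paper's own argument: set $g_j^{NE}=0$ in (\ref{eq:tdtotal}), impose $q_j^*=\bar q$ and $P_{ij}=\bar P$, and count the $N(N-1)$ ordered pairs. Your two extra checks go slightly beyond the paper, which simply treats $\bar q$ and $\bar P$ as fixed constants when asserting quadratic growth: the constant second difference $2\tau\bar q\bar P$, and the observation that $q_i^*=\alpha_i/\gamma_q$ (capped at $1$) does not depend on $N$.
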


\begin{proof}
See Appendix~\ref{app:proof3}.
\end{proof}

\paragraph{Result.} Under the corner solution, every cross-domain data need generates integration debt. The $N(N-1)$ term is exact for symmetric domains.

\paragraph{Economic intuition.} With 5 domains there are 20 potential integration pairs; with 20 domains, 380. Each narrow product creates burden for every potential consumer. This formalizes what the ATD literature \citep{Kruchten2012, Verdecchia2021} has documented through qualitative case studies.

\paragraph{Scope conditions.} Quadratic scaling requires the corner solution ($g = 0$ for all domains). Under interior solutions, debt scales less than quadratically. The symmetry assumption ($q_j = \bar{q}$, $P_{ij} = \bar{P}$) is relaxed in Appendix~\ref{app:heterogeneous}.

%=============================================================
% 6. ILLUSTRATIVE CALIBRATION AND SENSITIVITY ANALYSIS
%=============================================================
\section{Illustrative Calibration and Sensitivity Analysis}\label{sec:calibration}

The purpose of this section is not statistical estimation but disciplined magnitude illustration under plausible enterprise parameter values. The welfare figures that follow are model-implied scenario outputs, not identified causal estimates. We classify inputs as observed, inferred, or illustrative and specify what would be needed for true empirical calibration.

\paragraph{What true calibration would require.} Empirical identification of the model's core parameters would demand: (a)~exogenous variation in governance regimes (e.g., post-merger integration events, regulatory mandates requiring data mapping) to identify causal effects on generality; (b)~panel data with domain-level fixed effects to address the endogeneity of autonomy and generality; and (c)~validated measurement instruments for cross-domain data value ($\lambda$), which is likely endogenous to the existing level of generality.

\subsection{Parameter Identification}\label{sec:cal:params}

Table~\ref{tab:params} maps model parameters to observable proxies.

\begin{table}[H]
\centering
\caption{Observable Proxies for Model Parameters}\label{tab:params}
\footnotesize
\begin{tabularx}{\textwidth}{@{}l l l X l X@{}}
\toprule
\textbf{Param.} & \textbf{Meaning} & \textbf{Source} & \textbf{Observable Proxy} & \textbf{Range} & \textbf{Ident.\ Challenge} \\
\midrule
$\alpha$ & Analytics value & Inferred & Team size, decision frequency & $0.3$--$0.8$ & Confounded with domain maturity \\[2pt]
$\beta$ & Gen.--quality synergy & Illustrative & Quality change post-standardization & $0.1$--$0.3$ & Requires pre/post design \\[2pt]
$\lambda$ & Cross-domain value & Inferred & Cross-domain request ratio & $0.2$--$0.6$ & Endogenous to $g$ \\[2pt]
$\gamma_g$ & Generalization cost & Observed & Std.\ hours / build hours & $0.3$--$0.5$ & Stack-dependent \\[2pt]
$\kappa$ & Fixed std.\ cost & Observed & Governance overhead (FTE) & $0.1$--$0.4$ & Organization-specific \\[2pt]
$N$ & Domains & Observed & Organizational structure & $5$--$20$ & Endogenous to design \\[2pt]
$S$ & Switching cost & Inferred & Pipeline build cost & \$10K--100K & Legacy system confound \\
\bottomrule
\end{tabularx}
\begin{minipage}{\textwidth}
\vspace{4pt}
\footnotesize \textit{Notes.} ``Source'' classifies each parameter as directly \textit{observed} from enterprise telemetry, \textit{inferred} from organizational proxies, or \textit{illustrative} (assumed for scenario analysis). See Section~\ref{sec:calibration} header for epistemological framing.
\end{minipage}
\end{table}

\subsection{Baseline Calibration}\label{sec:cal:baseline}

We calibrate to a representative large enterprise: 12 domain teams, $\alpha = 0.5$, $\beta = 0.15$, $\lambda = 0.4$, $\gamma_g = 0.4$, $\kappa = 0.25$, $q^* \approx 0.6$.

\begin{equation}\label{eq:calibration_ne}
    g_i^{NE} = \max\left\{0, \; \frac{0.075 - 0.417}{0.4}\right\} = 0
\end{equation}

The model predicts a corner solution. The social optimum involves:
\begin{equation}\label{eq:calibration_so}
    g_i^{SO} = \frac{0.075 + 4.4 - 0.417}{0.4} \approx 0.58
\end{equation}

\subsection{Sensitivity Analysis}\label{sec:cal:sensitivity}

Table~\ref{tab:sensitivity} assesses robustness of the corner solution across the parameter space.

\begin{table}[H]
\centering
\caption{Sensitivity of Nash Equilibrium to Parameter Variation}\label{tab:sensitivity}
\footnotesize
\begin{tabularx}{\textwidth}{@{}l l X X@{}}
\toprule
\textbf{Parameter} & \textbf{Range} & \textbf{Corner Breaks When} & \textbf{Impact on $\Delta g_i$} \\
\midrule
$\lambda$ (cross-domain value) & $0.1$--$0.9$ & Does not break corner; widens gap & Gap increases linearly in $\lambda$ \\[2pt]
$\kappa$ (fixed std.\ cost) & $0.05$--$0.5$ & $\kappa < \alpha\beta q^* \approx 0.045$ & Lower $\kappa$ enables $g > 0$ \\[2pt]
$\gamma_g$ (marg.\ gen.\ cost) & $0.2$--$0.8$ & Only if $\kappa$ also overcome & Gap decreases in $\gamma_g$ \\[2pt]
$N$ (domains) & $3$--$50$ & Does not break corner; widens gap & $g^{SO}$ increases linearly in $N$ \\[2pt]
$\beta$ (synergy) & $0.05$--$0.8$ & $\beta > \kappa / (\alpha q^*) \approx 0.83$ & Higher $\beta$ makes generality privately valuable \\
\bottomrule
\end{tabularx}
\begin{minipage}{\textwidth}
\vspace{4pt}
\footnotesize \textit{Notes.} Each row varies one parameter, holding others at baseline. Increasing $\lambda$ does not break the corner because $\lambda$ enters only the planner's objective, not the domain's---the essence of the externality.
\end{minipage}
\end{table}

The critical threshold is the relationship between the private synergy benefit ($\alpha\beta$) and the effective fixed cost ($\kappa / q^*$). The parameter that makes generality most socially valuable ($\lambda$) is precisely the one that private incentives ignore.

\subsection{Welfare Loss Estimation}\label{sec:cal:welfare}

We calculate \emph{illustrative} annual welfare losses per domain from three sources: duplicated engineering effort (approximately \$300K), integration overhead for custom pipelines (approximately \$250K), and data quality issues from inconsistent definitions (approximately \$200K). This yields roughly \$750K per domain annually. For a 12-domain organization, $\Delta W \approx$ \$9M; for enterprises with 20 or more domains, $\Delta W \approx$ \$20M.

These are order-of-magnitude figures under assumed parameter values. Actual losses depend on technology maturity, governance culture, data complexity, and factors not captured in the model. The figures should not be cited as empirical measurements.

%=============================================================
% 7. GOVERNANCE REGIMES
%=============================================================
\section{Governance Regimes}\label{sec:governance}

\subsection{Pure Data Mesh (Decentralized)}\label{sec:gov:mesh}

Under pure data mesh, each domain independently chooses $(q_i, g_i)$. As shown in Section~\ref{sec:equilibrium}, this yields systematic underinvestment. The regime is inefficient but imposes minimal coordination costs and preserves domain autonomy.

\subsection{Centralized Hydration}\label{sec:gov:central}

A central platform invests in silver-layer products, choosing organization-wide generality standard $G$:
\begin{equation}\label{eq:central}
    \max_G \;\; \sum_{i=1}^{N} \left[\lambda \sum_{j \neq i} q_j G \right] + \sum_{j=1}^{M} U_j(G) - \Gamma \cdot G^2 \cdot \bar{q}
\end{equation}
Centralized investment approaches the social optimum when $\Gamma$ is sufficiently low, reflecting economies of scale. The regime faces bottleneck risk, as centralized teams may lack domain context.

\subsection{Federated Governance with Incentive Alignment}\label{sec:gov:federated}

A Pigouvian subsidy of:
\begin{equation}\label{eq:subsidy}
    s_i = (N - 1)\lambda \cdot q_i
\end{equation}
exactly corrects the externality within the model.

\paragraph{Implementation caveats.} Calculating $s_i$ requires knowledge of $\lambda$ and observation of true generality---both may be unobservable or strategically misrepresented. Subsidies based on \emph{declared} rather than \emph{verified} generality create incentives to game the metric. The resulting moral hazard and transaction costs may render the theoretical optimum practically unworkable, leading organizations to rationally choose second-best solutions.

\subsection{Regime Comparison}\label{sec:gov:comparison}

Within the model, federated governance ranks highest on welfare (exact externality correction), followed by centralized hydration (effective if scale economies are sufficient), hybrid approaches (partial correction), and pure data mesh (no correction). This ranking is conditional on the model's assumptions---costless verifiability and absent implementation friction. When transaction costs and information asymmetries are material, the ranking may shift. Table~\ref{tab:regimes} summarizes.

\begin{table}[H]
\centering
\caption{Model-Implied Regime Comparison Under Baseline Parameters}\label{tab:regimes}
\footnotesize
\begin{tabularx}{\textwidth}{@{}X l l X X@{}}
\toprule
\textbf{Regime} & \textbf{Equil.\ $g$} & \textbf{Coord.\ cost} & \textbf{Relative welfare} & \textbf{Impl.\ friction} \\
\midrule
Pure Data Mesh & $g = 0$ & Minimal & Lowest (full welfare loss) & Minimal \\[2pt]
Centralized Hydration & $g = G^*$ & Mod.--high & Second-highest & High (bottleneck risk) \\[2pt]
Federated + Incentives & $g = g^{SO}$ & Low (if verif.) & Highest (within model) & Very high (info asymmetry) \\[2pt]
Hybrid (central silver) & $g \approx 0.7 g^{SO}$ & Moderate & Third-highest & Moderate \\
\bottomrule
\end{tabularx}
\begin{minipage}{\textwidth}
\vspace{4pt}
\footnotesize \textit{Notes.} Welfare ranking is derived from the model under baseline parameters (Section~\ref{sec:calibration}). ``Relative welfare'' describes ordinal ranking, not cardinal measurement. The welfare gap between decentralized and socially optimal outcomes is on the order of several million dollars annually for a 12--20 domain organization; see Section~\ref{sec:cal:welfare}. The ranking is conditional on the model's information assumptions; under substantial transaction costs, centralized or hybrid approaches may dominate.
\end{minipage}
\end{table}

%=============================================================
% 8. TESTABLE PREDICTIONS AND EMPIRICAL DESIGN
%=============================================================
\section{Testable Predictions and Empirical Design}\label{sec:predictions}

The model generates falsifiable predictions. For each, we specify mechanism, outcome variable, empirical proxy, unit of analysis, data source, identification challenge, and preferred design.

\begin{table}[H]
\centering
\caption{Empirical Research Design for Model Predictions}\label{tab:predictions}
\scriptsize
\renewcommand{\arraystretch}{1.3}
\begin{tabularx}{\textwidth}{@{}>{\bfseries}l X X X@{}}
\toprule
 & \textbf{Prediction 1} & \textbf{Prediction 2} & \textbf{Prediction 3} \\
\midrule
Mechanism & Private FOC omits cross-domain externality & Each narrow product creates $N{-}1$ integration liabilities & Centralized generality reduces per-consumer integration cost \\
Outcome var. & Data product generality & Cross-domain integration cost & Time-to-delivery for cross-domain analytics \\
Empirical proxy & Cross-domain API calls / total; metadata compliance & Point-to-point ETL pipelines; integration ticket TTR & Project duration for cross-domain dashboards and ML models \\
Unit of analysis & Domain-quarter panel & Org-year cross-section & Project-level \\
Data source & API gateways; data catalogs (Alation, Collibra) & Git repos; ITSM platforms (ServiceNow, Jira) & PMO records; project management systems \\
Ident.\ challenge & Capable domains may receive more autonomy (reverse causality) & Isolating data ATD from code-level debt & Selection: silver-layer investors may differ systematically \\
Preferred design & IV (exogenous reorg) or panel FE & $TD = a + bN^2$ vs.\ $TD = a + bN$ test & Diff-in-diff around silver-layer investment events \\
\bottomrule
\end{tabularx}
\begin{minipage}{\textwidth}
\vspace{4pt}
\footnotesize \textit{Notes.} TTR~=~time-to-resolution. FE~=~fixed effects. IV~=~instrumental variables. Each prediction maps from a specific model mechanism to an operationalized empirical design.
\end{minipage}
\end{table}

\paragraph{Prediction 4: Narrow products proliferate in high-$\alpha$ domains.} \textit{Mechanism:} High-$\alpha$ domains optimize locally because the private synergy benefit $\alpha_i \beta$ may exceed $\kappa / q_i^*$ while still being far below the social optimum. \textit{Outcome variable:} Product generality at the domain level. \textit{Proxy:} Metadata compliance scores. \textit{Unit:} Domain-quarter panel. \textit{Data source:} Data catalog telemetry. \textit{Identification challenge:} High-$\alpha$ domains may also be high-capability, confounding generality with maturity. \textit{Preferred design:} Within-firm variation, controlling for domain size and technology stack.

\paragraph{Empirical entry point.} Of the four predictions, Prediction~3 (silver-layer investment reduces time-to-value) is likely the most tractable first empirical test. Organizations that make discrete, observable investments in centralized silver-layer infrastructure create natural ``treatment events'' amenable to difference-in-differences designs. The outcome variable (project duration for cross-domain analytics) is routinely tracked in project management systems and does not require novel measurement instruments. This design most closely mirrors the model's comparison of governance regimes (Section~\ref{sec:governance}) and would provide an initial empirical anchor for the broader research program. Prediction~2 (quadratic debt scaling) is also testable with existing ITSM data, though the specification test ($N^2$ vs.\ $N$) requires sufficient cross-sectional variation in the number of autonomous domains.

%=============================================================
% 9. DISCUSSION, LIMITATIONS, AND CONCLUSION
%=============================================================
\section{Discussion, Limitations, and Conclusion}\label{sec:discussion}

\subsection{Summary of Results}

Within the model, pure data mesh yields systematic underinvestment in generality. The gap grows with $N$ and $\lambda$; under plausible parameters, a corner solution obtains (Corollary~\ref{cor:trap}). Technical debt grows quadratically. Governance interventions that internalize externalities improve welfare within the model.

\subsection{Theoretical Implications}

The primary contribution is providing microeconomic foundations for a phenomenon widely observed but previously discussed only through heuristics. Formalizing the externality structure identifies the conditions under which underinvestment arises and the mechanisms through which governance can correct it. The reframing of ATD as an economic externality connects data governance to broader programs in platform economics and public goods.

The dynamics of Propositions~\ref{prop:underinvestment} and~\ref{prop:welfareloss} hold regardless of whether interacting nodes are internal departments or independent enterprises, suggesting potential generalizability to inter-organizational data spaces \citep{Duch-Brown2017}---though such extension would require modeling of inter-firm contracting frictions.

\subsection{Practical Implications}

Absent mechanisms that internalize cross-domain benefits, reusable silver-layer assets are unlikely to emerge from domain incentives alone. Either centralized investment or incentive mechanisms appear necessary, but the choice depends on $\lambda$, standardization costs, political feasibility, and tooling availability.

\subsection{Limitations and Scope Conditions}\label{sec:discussion:limits}

\paragraph{Information assumptions.} The model assumes complete information and costless verifiability (Assumption~1). Moral hazard and adverse selection arise when generality is unverifiable; extending the model to hidden action is a natural next step.

\paragraph{Symmetric domains.} Identical parameters are assumed (Assumption~2). Appendix~\ref{app:heterogeneous} shows the result survives---and may worsen---under asymmetry, but a full heterogeneous treatment remains for future work.

\paragraph{Static game.} The core equilibrium is one-shot (Assumption~3). A fully dynamic model with learning, reputation, and norm evolution would enrich the theory.

\paragraph{Reduced-form generality.} The scalar $g_i$ collapses a multidimensional phenomenon. Decomposing generality into component dimensions and modeling their interactions is a promising direction.

\paragraph{Empirical validation.} The calibration is illustrative. Future work should exploit exogenous variation---regulatory mandates, post-merger integration, platform migrations---to identify causal effects.

\paragraph{Organizational complexity.} The model abstracts from bounded rationality, political dynamics, and informal governance. Translating analytical models to complex sociotechnical systems is inherently contingent on institutional detail.

\subsection{Conclusion}

This paper identifies a specific microeconomic mechanism---a positive externality in the provision of cross-domain generality---that prevents efficient data product outcomes under pure decentralization. The generality gap, the data mesh trap corner solution, and the quadratic scaling of technical debt all emerge as analytical consequences of this externality structure, not as empirical claims about any particular organization. Whether a given organization falls within the trap is an empirical question; the model provides the formal framework for answering it.

The paper does not imply that decentralized data ownership is undesirable per se. Rather, it identifies the incentive conditions under which decentralization requires complementary coordination mechanisms. Governance interventions can in principle achieve efficient outcomes, but their effectiveness depends on the organization's ability to observe, verify, and reward genuine generality---a challenge that connects this work to broader research programs on mechanism design in decentralized systems and that defines the empirical agenda for future work.

%=============================================================
% APPENDICES
%=============================================================
\appendix

\section{Proofs}\label{app:proofs}

\subsection{Proof of Proposition~\ref{prop:underinvestment}}\label{app:proof1}

\textbf{Claim:} $g_i^{NE} < g_i^{SO}$ with gap $\Delta g_i = (N-1)\lambda / \gamma_g$.

\begin{proof}
Domain $i$'s private FOC:
\[
\frac{\partial \pi_i}{\partial g_i} = \alpha_i \beta q_i - \gamma_g g_i q_i - \kappa = 0
\]
Social planner's FOC:
\[
\frac{\partial W}{\partial g_i} = \alpha_i \beta q_i + (N-1)\lambda q_i + M\bar{\omega}_i q_i - \gamma_g g_i q_i - \kappa = 0
\]
Solving:
\begin{align*}
g_i^{NE} &= \frac{\alpha_i \beta - \kappa / q_i}{\gamma_g} \\[6pt]
g_i^{SO} &= \frac{\alpha_i \beta + (N-1)\lambda + M\bar{\omega}_i - \kappa / q_i}{\gamma_g}
\end{align*}
Abstracting from the consumer term:
\[
\Delta g_i = \frac{(N-1)\lambda}{\gamma_g} > 0 \qedhere
\]
\end{proof}

\subsection{Proof of Proposition~\ref{prop:debt}}\label{app:proof3}

\textbf{Claim:} $TD_{\text{total}} = \tau \cdot \bar{q} \cdot N(N-1) \cdot \bar{P}$ under corner solution.

\begin{proof}
When $g_i^{NE} = 0$ for all $i$:
\[
TD_{\text{total}} = \tau \sum_{i=1}^{N} \sum_{j \neq i} q_j \cdot P_{ij}
\]
With symmetry ($q_j = \bar{q}$, $P_{ij} = \bar{P}$):
\[
TD_{\text{total}} = \tau \cdot \bar{q} \cdot \bar{P} \cdot N(N-1) \qedhere
\]
\end{proof}

\section{Heterogeneous Domains Extension}\label{app:heterogeneous}

The baseline model assumes symmetric domains. We relax this by allowing heterogeneous analytics value ($\alpha_i$) and asymmetric cross-domain value ($\lambda_{ji}$, denoting the value domain $j$ derives from domain $i$'s generality).

\begin{definition}[Hub domain]
Domain $i$ is a \emph{hub domain} if $\sum_{j \neq i} \lambda_{ji} \gg \bar{\lambda}$, where $\bar{\lambda}$ is the mean cross-domain value. Hub domains produce data consumed heavily across the organization (e.g., ERP, CRM, supply chain systems).
\end{definition}

Under heterogeneity, domain $i$'s equilibrium generality remains:
\[
g_i^{NE} = \max\left\{0, \; \frac{\alpha_i \beta - \kappa / q_i^*}{\gamma_g}\right\}
\]
while the social optimum becomes:
\[
g_i^{SO} = \max\left\{0, \; \frac{\alpha_i \beta + \sum_{j \neq i} \lambda_{ji} - \kappa / q_i^*}{\gamma_g}\right\}
\]

\begin{proposition}[Hub Domain Vulnerability]\label{prop:hub}
Under heterogeneous cross-domain value, the generality gap for domain $i$ is:
\begin{equation}\label{eq:heterogap}
    \Delta g_i = \frac{\sum_{j \neq i} \lambda_{ji}}{\gamma_g}
\end{equation}
Hub domains (high $\sum_{j \neq i} \lambda_{ji}$) exhibit the largest generality gaps, and low-$\alpha_i$ hub domains face the strongest version of the underinvestment problem.
\end{proposition}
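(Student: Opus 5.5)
The plan is to mirror the proof of Proposition~\ref{prop:underinvestment} (Appendix~\ref{app:proof1}), replacing the symmetric spillover $(N-1)\lambda$ by the heterogeneous sum $\sum_{j\neq i}\lambda_{ji}$.

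\textbf{Step 1: the private first-order condition.} Under heterogeneity, domain $i$'s cross-domain benefit becomes $\sum_{j\neq i}\lambda_{ij} q_j g_j$. This term does not depend on $g_i$. Hence the private first-order condition is unchanged from \eqref{eq:foc}, namely $\alpha_i\beta q_i - \gamma_g g_i q_i - \kappa = 0$, and $g_i^{NE}$ is as displayed above.

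\textbf{Step 2: the planner's first-order condition.} Differentiate $W=\sum_k \pi_k$ with respect to $g_i$, abstracting from the consumer term exactly as in Proposition~\ref{prop:underinvestment}. The variable $g_i$ now enters every other domain's benefit through $\lambda_{ji} q_i g_i$. This yields
\[
\alpha_i\beta q_i + q_i\sum_{j\neq i}\lambda_{ji} - \gamma_g g_i q_i - \kappa = 0 .
\]
Dividing by $q_i$ gives the stated $g_i^{SO}$.

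\textbf{Step 3: the gap formula.} Subtracting the two interior solutions gives \eqref{eq:heterogap} directly. The gap is strictly increasing in $\sum_{j\neq i}\lambda_{ji}$, so hub domains have the largest gaps.

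\textbf{Step 4: the low-$\alpha_i$ claim, expected to be the main obstacle.} As with Corollary~\ref{cor:trap}, \eqref{eq:heterogap} holds exactly only when both solutions are interior. With corners the gap is $g_i^{SO}-\max\{0,g_i^{NE}\}$. I would split into two cases.

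\emph{Case 1: both solutions interior.} The gap is $\sum_{j\neq i}\lambda_{ji}/\gamma_g$, independent of $\alpha_i$.

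\emph{Case 2: $\alpha_i\beta<\kappa/q_i^*$.} Then $g_i^{NE}=0$, while for a hub $g_i^{SO}>0$. The realized gap is
\[
\frac{\alpha_i\beta + \sum_{j\neq i}\lambda_{ji} - \kappa/q_i^*}{\gamma_g}.
\]
Welfare, however, is lost entirely: the domain supplies zero generality despite the large social value.

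I would therefore state ``strongest version'' as two facts. First, the trap condition of Corollary~\ref{cor:trap} holds precisely for low $\alpha_i$. Second, the ratio $g_i^{NE}/g_i^{SO}$ is increasing in $\alpha_i$, and equals $0$ on the trap region. This makes the comparison rigorous without claiming that the additive gap grows as $\alpha_i$ falls, which is false in the interior.

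A further caveat: $q_i^*$ itself depends on $\alpha_i$. I would treat it as fixed, as the paper does, or at least note that the quality first-order condition makes $q_i^*$ increasing in $\alpha_i$. That reinforces the trap for low $\alpha_i$, since $\kappa/q_i^*$ then rises as $\alpha_i$ falls.
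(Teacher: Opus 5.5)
Your Steps 1--3 are the paper's argument. The paper displays the heterogeneous $g_i^{NE}$, which is unchanged because $g_i$ does not enter $B_i^{\text{cross}}$. It also displays $g_i^{SO}$, with $(N-1)\lambda$ replaced by $\sum_{j\neq i}\lambda_{ji}$. It then reads off \eqref{eq:heterogap} by subtraction, silently dropping the $\max\{0,\cdot\}$ operators.

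Your genuine departure is Step 4. The paper supports the low-$\alpha_i$ clause only with the intuition that such a hub ``has the largest gap and the least private incentive to close it.'' You instead confine \eqref{eq:heterogap} to interior solutions. You then recast ``strongest version'' as membership in the trap region of Corollary~\ref{cor:trap}, plus monotonicity of $g_i^{NE}/g_i^{SO}$ in $\alpha_i$.

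Your version yields a statement that is actually true; the paper's unconditional formula is not. On the trap region your Case~2 expression equals $\frac{1}{\gamma_g}\sum_{j\neq i}\lambda_{ji}-\frac{1}{\gamma_g}\left(\kappa/q_i^*-\alpha_i\beta\right)$. This is strictly below \eqref{eq:heterogap}, and it \emph{shrinks} as $\alpha_i$ falls. So, as long as $g_i^{SO}<1$, the additive gap is weakly increasing in $\alpha_i$ in both regions. You can state this outright; it is stronger than your remark that the paper's reading fails ``in the interior.''

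Your case split does miss the upper bound $g_i\le 1$, and for hubs that is typically the binding constraint. If $\alpha_i\beta+\sum_{j\neq i}\lambda_{ji}-\kappa/q_i^*>\gamma_g$, the planner sets $g_i^{SO}=1$. In the symmetric baseline of Section~\ref{sec:cal:baseline}, the displayed expression for $g_i^{SO}$ evaluates to about $10$, not $0.58$. So your Case~2 formula needs a $\min\{1,\cdot\}$. The realized gap is then $1-g_i^{NE}$, which is weakly \emph{decreasing} in $\alpha_i$ and reaches its maximum of $1$ exactly on the trap region.

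In that regime the paper's literal additive reading (low-$\alpha_i$ hubs have the largest gap) is correct. Your rejection of that reading should therefore be restricted to $g_i^{SO}<1$. Your trap-plus-ratio formulation survives unchanged, since the ratio is then just $g_i^{NE}$.

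Finally, like the paper, you evaluate both first-order conditions at a common $q_i^*$, and the cancellation of the $\kappa/q_i$ terms depends on this. The planner's quality condition carries the extra term $g_i\sum_{j\neq i}\lambda_{ji}$, so in general $q_i^{SO}\neq q_i^{NE}$. You should therefore state \eqref{eq:heterogap} explicitly as a comparison of generality at fixed quality.
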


\paragraph{Result.} The gap depends on the sum of values that \emph{other domains} place on domain $i$'s data---a quantity that does not enter domain $i$'s private objective.

\paragraph{Economic intuition.} A domain that produces highly valuable data for others but derives little value from its own analytics has the largest gap and the least private incentive to close it. Core operational domains (ERP, supply chain) that function as organizational data hubs face the strongest externality precisely because their data is most valuable to others. This suggests that the underinvestment result is not an artifact of symmetry but a structural feature of decentralized governance, potentially exacerbated by realistic asymmetry.

\paragraph{Scope conditions.} The result assumes that $\lambda_{ji}$ values are common knowledge. Under private information about cross-domain value, additional mechanism design considerations arise.

%=============================================================
% REFERENCES
%=============================================================
\bibliographystyle{apalike}

\end{document}